\newcommand{\no}{\nonumber}
\shorttitle{On the role of F\"{o}llmer-Schweizer minimal martingale measure in  Risk Sensitive control Asset Management.} 
\begin{document}
\textcolor{red}{A.Deshpande (2015), On the role of F\"{o}llmer-Schweizer minimal martingale measure in  Risk Sensitive control Asset Management,Vol. 52, No. 3, Journal of Applied Probability.}

\title{On the role of F\"{o}llmer-Schweizer minimal martingale measure in  Risk Sensitive control Asset Management} 
\authorone[University of Warwick,~UK]{Amogh Deshpande} 

\addressone{ Department of Statistics, University of Warwick, UK, CV47AL. Email: addeshpa@gmail.com} 

\begin{abstract}

Kuroda and Nagai \cite{KN} state that the factor process in the Risk Sensitive control Asset Management (RSCAM) is stable under the F\"{o}llmer-Schweizer minimal martingale measure . Fleming and Sheu \cite{FS} and more recently F\"{o}llmer and Schweizer \cite{FoS} have observed that  the role of the minimal martingale measure in this portfolio optimization is yet to be established. In this article we aim to address this question by explicitly connecting the optimal wealth allocation to the minimal martingale measure. We achieve this by using a ``trick" of observing this problem in the context of model uncertainty via a two person zero sum stochastic differential game between the investor and an antagonistic market that provides a probability measure. We obtain some startling insights. Firstly,  if short-selling is not permitted and if the factor process evolves under the minimal martingale measure  then the investor's optimal strategy can only be to invest in the riskless asset (i.e. the no-regret strategy). Secondly, if the factor process and the stock price process have independent noise, then even if the market allows short selling, the optimal strategy for the investor must be the no-regret strategy while the factor process will evolve under the minimal martingale measure .
\end{abstract}

\keywords{Risk Sensitive control Asset Management;  Minimal martingale measure; zero sum stochastic differential game;  stability.} 
\ams{49L02}{60G02} 

\section{Introduction} 
$\indent$ Risk sensitive control Asset Management  (RSCAM) balances the investor�s interest in maximizing the expected growth rate of wealth against his
aversion to risk due to deviations of the actually realized rate from the expectation for a finite time horizon. The subjective notion of investor�s risk aversion is parameterized by a single variable say $\theta$. In RSCAM we consider the following criterion to be maximized. For a given $\theta>-2,\theta\neq 0$  and for time horizon $T<\infty$, find wealth allocation control denoted by $h(t)$ ,
 the risk- sensitive expected growth rate up to time horizon $T$ criterion $J(v,h,T;\theta)$ defined by,
\begin{eqnarray}\label{1.1}
J(v,h,T;\theta)\triangleq \frac{-2}{\theta}\log E[\exp{[\frac{-\theta}{2}\log{V^{h}({T})}]}]
\end{eqnarray}
where $V^{h}({T})$ is time-$T$ portfolio value.
An asymptotic expansion around $\theta=0$ for the above criterion yields
\begin{eqnarray*}
J(v,h,T;\theta)=E[V^{h}({T})]-\frac{\theta}{2}Var(V^{h}({T}))+O(\theta^{2});~~~~V^{h}(0)=v
\end{eqnarray*}
As is obvious from the preceding equation, $\theta>0$ corresponds to risk averse investor, $\theta<0$ is risk seeking investor and $\theta=0$ is a risk neutral investor.  Hence the optimal expected utility function depends on $\theta$ and is a generalization of the traditional  stochastic control in the sense that now the degree of risk aversion  of the investor is explicitly parameterized through $\theta$ rather than importing it in the problem via an externally defined utility function.   For this reason investment optimization models have been popularly reformulated as  risk-sensitive control problems . For a general reference on risk-sensitive control, refer Whittle \cite{W}. \\
\indent Risk-sensitive control was first applied to solve financial problems by Lefebvre and Montulet \cite{LM} in a corporate finance context and by Fleming \cite{Fleming} in a portfolio selection context. A RSCAM problem  with $m$ securities and $n$ (economic) factors was introduced by Bielecki and Pliska \cite{BP}. Their factor model however made a rather strong assumption that the factor process and the securities price process in their financial optimization model had independent noise. A generalization to this model, relaxing this assumption was made by Kuroda and Nagai \cite{KN} who introduced an elegant solution method based on a change of measure argument which transforms the risk sensitive control problem into a linear exponential of a quadratic regulator. They  solved the associated HJB PDE over a finite time horizon and then studied the properties of the ergodic HJB PDE. We go about formally stating the problem by first describing the factor model for a risk averse investor.\\
\indent Let ($\Omega,\mathcal{F},(\mathcal{F}_{t})_{t\geq 0},\mathbb{P}$) be the filtered probability space. Consider a market of $m+1 \geq 2$ securities and $n\geq 1$ factors. We assume that the set of securities includes one bond whose price is governed  by the ODE
\begin{eqnarray}\label{1.2}
dS^{0}({t})=r({t})S^{0}({t})dt, ~~~~S^{0}({0})=s^{0}
\end{eqnarray}
where $r({t})$ is a deterministic function of $t$. The other security prices are assumed to satisfy the following SDE's
\begin{eqnarray}\label{1.3}
dS^{i}({t})=S^{i}({t})\{(a+AX({t}))^{i}dt+\sum_{k=1}^{n+m}{\sigma_{k}^{i}dW^{k}({t})}\}, S^{i}({0})=s^{i}, i=1,...,m.,
\end{eqnarray}
where the component wise factor process satisfies,
\begin{eqnarray*}
dX^{i}({t})&=&(b+BX({t}))^{i}dt+\sum_{k=1}^{n+m}{\lambda_{k}^{i}} dW^{k}({t}),~~~ X^{i}({0})=x^{i} , i=1,...,n.
\end{eqnarray*}
Vectorically $X(t)=(X^{1}(t),...,X^{n}(t))^{'}$ (where the symbol $'$ signifies transpose)  satisfies the following dynamics,
\begin{eqnarray}\label{1.4}
dX({t})&=&(b+BX({t}))dt+\Lambda dW(t), X(0)=x \in \mathbb{R}^{n}.
\end{eqnarray}
Here,
$W({t})=(W^{k}({t}))_{k=1,...,n+m}$ is an $n+m$ dimensional standard Brownian motion  defined on the filtered probability space. The model parameters $A, B$ are respectively $m \times n, n \times n, n \times (m+n)$ constant matrices and $a \in \mathbb{R}^{m}$ , $b \in \mathbb{R}^{n}$. The constant matrix $[\sigma_{k}^{i}] \triangleq \Sigma, i=1,2....,m; k=1,2,...,(n+m)$. Matrix $\Sigma\Sigma^{'}$ is assumed positive definite. Similarly, $[\lambda_{k}^{i}]\triangleq \Lambda, i=1,2....,n; k=1,2,...,(n+m)$.
We denote $l^{'}$ as transpose of $l$. Likewise let $|v |$  be a suitable vector norm for any vector $v$ while $||M||$ symbolizes a suitable matrix norm for any matrix $M$. As discussed earlier, as part of generalizing the Bielecki and Pliska factor model \cite{BP}, Kuroda and Nagai \cite{KN} assume that the factor process and the securities price process is correlated i.e. $\Sigma\Lambda^{'}\neq 0$. The  investment strategy which represents proportional allocation of total wealth in the $i^{th}$ security $S^{i}({t})$ is denoted by $h^{i}({t})$  for $i=0,1,...,m$ and we set, $S({t}):=(S^{1}({t}),S^{2}({t}),...,S^{m}({t}))^{'},  h({t}):=(h^{1}({t}),...,h^{m}({t}))^{'}$ and $\mathcal{G}_{t}=\sigma(S({u}),X({u});u \leq t)$ is the filtration generated by the underlying stock price process and the factor process. Let $\mathcal{H}(T)$ be a space of $\mathbb{R}^{m}$ valued controls for the investor meaning we say that $h({t})\in \mathcal{H}(T)$ where $h(t)$   is $\mathcal{G}_{t}$-progressively measurable stochastic processes such that $\sum_{i=1}^{m}{h^{i}({t})}+h^{0}({t})=1$,
$P(\int_{0}^{T}{{|h({t})|}^{2}dt} < \infty)=1$ and $E[e^{\frac{\theta^{2}}{2}\int_{0}^{T}{h^{'}_{t}\Sigma\Sigma^{'}h_{t}}dt}]<\infty$. For given $h({t})\in \mathcal{H}(T)$ the process $V({t})=V^{h}({t})$  is determined by the SDE,
\begin{eqnarray*}
\frac{dV^{h}({t})}{V^{h}({t})}=h^{0}({t})r({t})dt + \sum_{i=1}^{m}{h^{i}({t})\{(a+AX({t}))^{i}dt}+\sum_{k=1}^{m+n}{\sigma_{k}^{i}dW^{k}({t})}\}; ~~~~V^{h}({0})=v.
\end{eqnarray*}
which can be written vectorically as,
\begin{eqnarray}\label{1.5}
\frac{dV^{h}{(t)}}{V^{h}{(t)}}=(r({t}) + h^{'}(t)\delta{(t)})dt+ h^{'}{(t)}\Sigma dW{(t)};~~~ V^{h}{(0)}=v.
\end{eqnarray}
where $\delta{(t)}\triangleq a+AX{(t)}-r{(t)}1$. From the expression of security/stock price dynamics $S({t})$ (\ref{1.3}), it is obvious that the market is incomplete (as it has $m $ securities and $n+m$ Brownian drivers) and hence there exist many equivalent martingale measures or EMM's. We refer the reader to Karatzas and Shreve \cite{KS} for  a general treatment on market incompleteness. One such candidate equivalent martingale measure   is the F\"{o}llmer-Schweizer minimal martingale measure. For the continuous adapted stock price process $S=(S(t))_{0 \leq t \leq T}$, the minimal martingale measure $\mathbb{P}^{*}$ (say)is the unique equivalent local martingale measure with the property that local $\mathbb{P}$-martingale part of $S$ are also local $\mathbb{P}^{*}$-martingales.
For the F\"{o}llmer-Schweizer minimal martingale measure $\mathbb{P}^{*}$, the density process is given by the following dynamics, \\
\begin{eqnarray}\label{1.6}
\frac{d\mathbb{P}^{*}}{d\mathbb{P}}=\mathcal{E}(-\int_{0}({(\Sigma^{'}(\Sigma\Sigma^{'})^{-1})~ \delta})^{'}~dW)_{T}
\end{eqnarray}
\indent  Kuroda and Nagai \cite{KN} observe that the condition of stability of  the matrix $B-\Lambda\Sigma^{'} (\Sigma\Sigma^{'})^{-1}A$ induces stability on the  factor process $X=(X(t))_{0 \leq t \leq T}$  under the minimal martingale measure. Fleming and Sheu \cite{FS} and more recently F\"{o}llmer and Schweizer \cite{FoS} have observed that this observation and more significantly the role of the minimal martingale measure in this portfolio asset management problem is yet to be established. In this article we address these questions. We do so by conceptualizing the RSCAM as a zero sum stochastic differential game between (a market) that provides a probability measure that works antagonistically against another player (the investor) who otherwise wants to maximize the risk-sensitive criterion.  We call this game ({\bf GI})(refer (\ref{2.4})). We need to determine the  controls that forms the saddle point equilibrium to this game. This will then illuminate the explicit dependence between controls $h(t)$ and the probability measure which would then lead us to connect the role played by the minimal martingale measure. We achieve this objective through the following road map:\\
{\bf Key Steps}\\
{\bf Step 1}: We  re-formulate the game ({\bf GI}) into an auxiliary game characterized by the exponential of integral criterion that involves just the factor process $X$. We call this game ({\bf GII})(refer equation (\ref{2.11})).\\
{\bf Step 2}: We then provide verification lemma for ({\bf GII}).\\
{\bf Step 3}: We then obtain the optimal controls and deduce the connection between the minimal martingale measure and investor's optimal strategy.  \\
{\bf Step 4}: To complete the analysis we end by showing that the controls hence obtained while solving game ({\bf GII}) in Step 3 infact also constitutes saddle-point equilibrium strategy for the original game ({\bf GI}).\\
\section{Worst-Case Risk sensitive Zero sum  stochastic differential game}
As discussed in the introduction, the Kuroda and Nagai investment market model is incomplete. We are interested in understanding the influence minimal martingale measure has on this portfolio optimization problem. We conjure an approach, whereby we can  explicitly characterize the dependence between the minimal martingale measure and the control variable $h$. Formally we set to do this is to define a ``market world". The market world is a space of  probability measures defined as
 \begin{eqnarray*}
 \mathcal{P} \triangleq \{\mathbb{P}^{\eta,\xi}: (\eta,\xi)=(\eta({t}),\xi({t}))_{T \geq t\geq 0} \in \mathcal{O}(T)\}
 \end{eqnarray*}
 on ($\Omega ,\mathcal{F}$), where $\mathcal{O}(T)$ denotes the set of deterministic controls $\eta(t) \in \mathbb{R}^{n \times (n+m)}$ and $\xi(t)\in \mathbb{R}^{1 \times (n+m)}$ which are continuous over the compact set $[0,T]$ and hence bounded. For $(\eta(t),\xi(t)) \in \mathcal{O}(T)$ for fixed time horizon $T$, the restriction of $\mathbb{P}^{\eta,\xi}$ to the $\sigma-$ field $\mathcal{F}_{T}$ is given by the Radon-Nikodym density
 \begin{eqnarray}\label{2.1}
D^{\eta,\xi}({T}) \triangleq \frac{d \mathbb{P}^{\eta,\xi}}{d\mathbb{P}}|_{\mathcal{F}_{T}} \triangleq \mathcal{E}\bigg(\int_{0}{(\eta({t})^{'}X(t)+\xi^{'}(t))^{'}}dW(t)\bigg)_{T}.
 \end{eqnarray}
 with respect to the reference measure $\mathbb{P}$.  Here $\mathcal{E}(\cdot)$ is the Dole\'ans-Dade exponential. We now show that for $(\eta,\xi) \in \mathcal{O}(T)$, $\mathbb{P}^{\eta,\xi}$ is a probability measure.\\
\begin{lem}{$E[D^{\eta,\xi}(T)]=1$ for all $(\eta, \xi) \in \mathcal{O}(T)$.} \end{lem}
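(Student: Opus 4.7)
The plan is to establish that $D^{\eta,\xi}$ is a true $\mathbb{P}$-martingale on $[0,T]$, from which $E[D^{\eta,\xi}(T)] = D^{\eta,\xi}(0) = 1$ follows at once. Being the Doléans-Dade exponential of a continuous local martingale, $D^{\eta,\xi}$ is always a nonnegative local martingale, hence a supermartingale, so the bound $E[D^{\eta,\xi}(T)] \leq 1$ comes for free; the content of the lemma is the reverse inequality. My approach is to invoke Novikov's criterion on a sufficiently fine partition of $[0,T]$, exploiting the Gaussian structure of the factor process $X$, and then assemble the global martingale property via the tower property.

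The first concrete step is to bound the Novikov integrand. Writing
\begin{equation*}
\varphi(t) \triangleq \eta(t)'X(t) + \xi'(t),
\end{equation*}
one has $|\varphi(t)|^2 \leq C(1+|X(t)|^2)$ for a deterministic constant $C = C(\eta,\xi,T)$, because $(\eta,\xi) \in \mathcal{O}(T)$ are continuous on the compact interval $[0,T]$ and therefore bounded. The second step is to exploit the linearity of the factor SDE (\ref{1.4}): the explicit representation
\begin{equation*}
X(t) = e^{Bt}x + \int_0^t e^{B(t-u)}b\,du + \int_0^t e^{B(t-u)}\Lambda\,dW(u)
\end{equation*}
shows that $X$ is a Gaussian process with mean and covariance uniformly bounded on $[0,T]$. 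A standard Gaussian estimate for quadratic forms then yields $E[\exp(\alpha |X(u)|^2)] < \infty$ for every $\alpha$ smaller than $\tfrac{1}{2}\lambda_{\max}(\text{Cov}(X(u)))^{-1}$. Partitioning $[0,T]$ into subintervals $0 = t_0 < t_1 < \cdots < t_N = T$ with mesh $\delta$ chosen small enough that $C\delta$ lies below this threshold uniformly in $u \in [0,T]$, Fubini combined with Jensen's inequality yields
\begin{equation*}
E\!\left[\exp\!\left(\tfrac{1}{2} C \int_{t_{i-1}}^{t_i} (1 + |X(u)|^2)\,du\right)\right] < \infty
\end{equation*}
for each $i$. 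Novikov's criterion applied to the regular conditional distribution given $\mathcal{F}_{t_{i-1}}$ shows that $D^{\eta,\xi}$ is a true martingale on $[t_{i-1}, t_i]$, so $E[D^{\eta,\xi}(t_i) \mid \mathcal{F}_{t_{i-1}}] = D^{\eta,\xi}(t_{i-1})$; iterating via the tower property delivers $E[D^{\eta,\xi}(T)] = 1$.

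The principal obstacle is the quantitative calibration inside the Gaussian moment estimate: the admissible coefficient $\alpha$ depends on the spectral radius of $\text{Cov}(X(u))$, so one needs a uniform bound on $\sup_{u \leq T}\|e^{Bu}\|$ and on the cumulative covariance $\int_0^u e^{B(u-r)}\Lambda\Lambda'e^{B'(u-r)}\,dr$ in order to fix a mesh $\delta$ independent of the left endpoint $t_{i-1}$. Once those bounds are secured the partition-plus-tower combinatorics are routine, but without them the Novikov threshold could in principle degenerate on part of $[0,T]$ and force a different line of argument (for instance Beneš's linear-growth criterion, which would serve as a fallback).
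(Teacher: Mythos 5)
Your proof is correct and follows essentially the same route as the paper: both rest on the Gaussian structure of $X$ together with the boundedness of the deterministic controls $(\eta,\xi)$ on $[0,T]$, and then invoke Novikov's criterion. The partition-plus-tower argument you spell out is exactly the content of the standard proof the paper delegates to Lemma 3.1.1 of Bensoussan \cite{Ben}, and your closing worry about uniformity of the Novikov threshold is harmless here since the mean and covariance of $X$ are continuous, hence uniformly bounded, on the compact interval $[0,T]$.
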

\begin{proof}
 The process  $X(t)$ in (\ref{1.4}) is a Gaussian process. From (\ref{1.4}) and the Gronwall's inequality we have $E|X(t)|\leq (E|X(0)|+|b|T)\exp(||B||t)$ and $Cov(X(t))=\Lambda^{'}\Lambda t$ where $Cov$ is the covariance function. As $\eta(t), \xi(t)$ are deterministic controls and are bounded, $\phi(t) \triangleq X^{'}(t)\eta(t)+\xi(t)$ is also a  Gaussian process with bounded mean and covariance on a finite time interval $[0,T]$. Hence by an application of Novikov's condition,  the Dole\'ans-Dade exponential in (\ref{2.1}) is a $\mathbb{P}$- martingale. A standard proof of this fact can be seen in  Lemma 3.1.1 in Bensoussan \cite{Ben}.
 \end{proof}
  We now re-evaluate the optimization criterion $J$ under the new probability measure $\mathbb{P}^{\eta,\xi}$ and call it $\tilde{J}$ which is defined as
\begin{eqnarray*}
\tilde{J}(v,h,\eta,\xi,T;\theta)= \frac{-2}{\theta}\log E^{\eta,\xi}[\exp{[\frac{-\theta}{2}\log{V^{h,\eta,\xi}({T})}]}].
\end{eqnarray*}
where the portfolio value under the new  probability measure $\mathbb{P}^{\eta,\xi}$ is given by
\begin{eqnarray}\label{2.2}
\frac{dV^{h,\eta,\xi}({t})}{V^{h,\eta,\xi}({t})}&=&\bigg[r(t)+ h^{'}({t})(\delta({t})-\Sigma(\eta^{'}(t)X(t)+\xi^{'}(t)))\bigg]dt+ h^{'}({t})\Sigma dW^{\eta,\xi}({t}) , \no\\
 V^{h,\eta,\xi}({0})&=&v.
\end{eqnarray}
From Lemma 2.1 we have that $\mathbb{P}^{\eta,\xi}$ is a probability measure for $(\eta, \xi) \in \mathcal{O}(T)$. \\
From the standard result in Girsanov \cite{Girsanov}, under the probability measure $\mathbb{P}^{\eta,\xi}$,
 \begin{eqnarray*}
W^{\eta,\xi}(t)\triangleq W(t)+\int_{0}^{t}{(\eta^{'}(s)X(s)+\xi^{'}(s))}ds,
 \end{eqnarray*}
 is a standard Brownian motion process and therefore the factor process $X(t)$, vectorically, satisfies the following SDE
 \begin{eqnarray}\label{2.3}
dX(t)=(b+BX(t)-\Lambda (\eta^{'}(t)X(t)+\xi^{'}(t)))dt+\Lambda dW^{\eta,\xi}(t),
 \end{eqnarray}
 \begin{rem}
 From equations (\ref{1.6}) and (\ref{2.1}) , it is clear that $\mathbb{P}^{\eta,\xi}$ is a minimal martingale measure for $\hat{\eta}(t)\triangleq \eta(t)=A^{'}(\Sigma\Sigma^{'})^{-1}\Sigma$ and $\hat{\xi}(t) \triangleq \xi(t)=(a-r(t)1)^{'}(\Sigma\Sigma^{'})^{-1}\Sigma$. \\
\indent Kuroda and Nagai \cite{KN} have stated that  under the condition of stability of the matrix  $B-\Lambda\Sigma^{'} (\Sigma\Sigma^{'})^{-1}A$, the factor process $X(t)$ is stable under the minimal martingale measure. In light of our Remark 2.1, we validate this statement now.
\end{rem}
\begin{rem}
As $\eta(t)=\hat{\eta}(t)$ and $\xi(t)=\hat{\xi}(t)$ corresponds to the minimal martingale measure, the dynamics of $X(t)$ under the minimal martingale measure can be re-written as
\begin{eqnarray*}
dX(t)=\bigg(b-\Lambda \Sigma^{'}(\Sigma\Sigma^{'})^{-1}(a-r(t)1)+(B-\Lambda\Sigma^{'}(\Sigma\Sigma^{'})^{-1}A)X(t)\bigg)dt+\Lambda dW^{\hat{\eta},\hat{\xi}}(t).
\end{eqnarray*}
 We are interested in finding the behavior of the solution $X(t)$ as $t \rightarrow \infty$. The coefficient of the $X(t)$ term in the drift part of above equation  is  $B-\Lambda\Sigma^{'}(\Sigma\Sigma^{'})^{-1}A$. Since by assumption this coefficient term is a stable matrix, $X(t)$ is hence stable under  the minimal martingale measure.
 \end{rem}
\indent  We need to now pin down the influence  the minimal martingale measure has on this portfolio optimization problem to further resolve the inquiry posed by Fleming and Sheu \cite{FS}. \\
To do so, as stated earlier, we conceptualize this problem as a game  between a player termed as  the market against the investor. We denote this game as  ({\bf GI}).\\
{Game {\bf GI}}\textit{
Obtain $\hat{h} \in \mathcal{H}(T)$ and $(\hat{\eta},\hat{\xi})\in \mathcal{O}(T)$ such that,
\begin{eqnarray}\label{2.4}
{\tilde{J}(v,\hat{h},\hat{\eta},\hat{\xi},T;\theta)}
&=&\sup_{h \in {\mathcal{H}}(T)}\inf_{(\eta,\xi) \in \mathcal{O}(T)}{\frac{-2}{\theta}\log E^{\eta,\xi}[\exp{[\frac{-\theta}{2}\log{V^{h,\eta,\xi}({T})}]}]}\no\\
&=&\inf_{(\eta,\xi) \in \mathcal{O}(T)}\sup_{h \in {\mathcal{H}}(T)}{\frac{-2}{\theta}\log E^{\eta,\xi}[\exp{[\frac{-\theta}{2}\log{V^{h,\eta,\xi}({T})}]}]}.
\end{eqnarray}}
Our intention is to re-write the objective function $\tilde{J}$  purely in terms of the factor process $X$. We set to achieve this by defining,
\begin{eqnarray}\label{2.5}
g(x,h,\eta,\xi,r;\theta) &\triangleq & \frac{1}{2}(\frac{\theta}{2}+1)h^{'}\Sigma\Sigma^{'}h-r-h^{'}\bigg(\delta- \Sigma(\eta^{'}x+\xi^{'})\bigg).\no\\
\end{eqnarray}
Hence from (\ref{2.5}), we have
\begin{eqnarray}\label{2.6}
-\frac{\theta}{2}d{\log{V^{h,\eta,\xi}}(t)}=\bigg(\frac{\theta}{2}g(X(t),h(t),\eta(t),\xi(t),r(t);\theta)-\frac{\theta^{2}}{8}h^{'}(t)\Sigma\Sigma^{'}h(t)\bigg)dt-\frac{\theta}{2}h^{'}(t)\Sigma dW^{\eta,\xi}(t).
\end{eqnarray}
We next define the following stochastic exponential given as,
\begin{eqnarray}\label{2.7}
\frac{d \mathbb{P}^{h,\eta,\xi}}{d\mathbb{P}^{\eta,\xi}}|_{\mathcal{F}_{T}} = \mathcal{E}(-\frac{\theta}{2}\int_{0} h^{'}(t)\Sigma dW^{\eta,\xi}(t))_{T}.
\end{eqnarray}
From the definition of the class of controls $\mathcal{H}(T)$, it is clear from an application of Novikov's condition that $  \mathbb{P}^{h,\eta,\xi}$ is a probability measure. Under this probability measure $\mathbb{P}^{h,\eta,\xi}$, the standard result of Girsanov \cite{Girsanov} yields that
\begin{eqnarray*}
W^{h,\eta,\xi}(t) \triangleq W^{\eta,\xi}(t)+\int_{0}^{t}{\frac{\theta}{2}\Sigma^{'}h(s)}ds,
\end{eqnarray*}
is a standard $\mathbb{P}^{h,\eta,\xi}$- Brownian motion and the factor process $X(t)$ satisfies the following dynamics
\begin{eqnarray}\label{2.8}
dX(t)=(b+BX(t)-\Lambda (\eta^{'}(t)X(t)+\xi^{'}(t))-\frac{\theta}{2}\Lambda\Sigma^{'}h(t))dt+\Lambda dW^{h,\eta,\xi}(t).
\end{eqnarray}
Now, under the new probability measure $\mathbb{P}^{h,\eta,\xi}$, and using (\ref{2.4})-(\ref{2.6}) and (\ref{2.8})  we define an auxiliary optimization  criterion $I(v,x,h,\eta,\xi,t,T;\theta)$ given as
\begin{eqnarray}\label{2.9}
I(v,x,h,\eta,\xi,t,T;\theta)=\log{v}-\frac{2}{\theta}\log E^{h,\eta,\xi}\bigg[\exp\bigg(\frac{\theta}{2}\int_{0}^{T-t}{g(X(s),h(s),\eta(s),\xi(s),r(s+t);\theta)}ds\bigg)\bigg].
\end{eqnarray}
This will lead us to frame the auxiliary game {\bf GII} that constitutes our first step as stated under the road map in the Introduction.\\
{\bf \textit{\underline{Step 1:}}}\\
In a worst-case risk-sensitive asset management scenario, the investor chooses a portfolio process $h$ so as to maximize the expected exponential-of-integral  performance index $I$. Then the response of the market  to this choice is to select ($\eta,\xi$) (and hence a probability measure)that minimizes the maximum expected exponential-of-integral  performance index. Formally, \\
{The upper value of this game is given by}\\
\begin{eqnarray*}
\bar{u}(t,x)=\sup_{h \in {\mathcal{H}(T)}}\inf_{(\eta,\xi)\in {\mathcal{O}(T)}}I(v,x,h,\eta,\xi,t,T;\theta),
\end{eqnarray*}
{while the lower value of the game is given by }\\
\begin{eqnarray*}
\underline{u}(t,x)=\inf_{(\eta,\xi)\in {\mathcal{O}(T)}}\sup_{h \in {\mathcal{H}(T)}}I(v,x,h,\eta,\xi,t,T;\theta),
\end{eqnarray*}
The game has a value provided,
\begin{eqnarray}\label{2.10}
\bar{u}(t,x)=\underline{u}(t,x)=u(t,x)=I(v,x,\hat{h},\hat{\eta},\hat{\xi},t,T;\theta).
\end{eqnarray}
and hence $ \hat{h}, (\hat{\eta}, \hat{\xi})$ is a saddle-point equilibrium. We aim to provide a verification lemma for which (\ref{2.10}) is satisfied. In that spirit,  consider the \textit{exponentially transformed} criterion which is simply obtained via the transformation ${\tilde{u}}(t,x)=\exp(-\frac{\theta}{2}{u}(t,x))$. This transformation defines what we call as game {\bf GII}.\\
{Game ({\bf GII})}\\
Obtain $\hat{h}\in \mathcal{H}(T)$ and $(\hat{\eta},\hat{\xi}) \in {\mathcal{O}}(T)$ such that,
\begin{eqnarray}\label{2.11}
\tilde{u}(t,x)&=&\inf_{h \in {\mathcal{H}}(T)}\sup_{(\eta,\xi)\in {\mathcal{O}(T)}}E^{{h},{\eta},{\xi}}[\exp\{\frac{\theta}{2}\int_{0}^{T-t}{g(X({s}),{h}({s}),{\eta}({s}),{\xi}({s}),r(s+t);\theta)}ds\}v^{-\theta/2}],\no\\
&=&\sup_{(\eta,\xi)\in {\mathcal{O}(T)}}\inf_{h \in {\mathcal{H}}(T)}E^{{h},{\eta},{\xi}}[\exp\{\frac{\theta}{2}\int_{0}^{T-t}{g(X({s}),{h}({s}),{\eta}({s}),{\xi}({s}),r(s+t);\theta)}ds\}v^{-\theta/2}],\no\\
&=& E^{\hat{h},\hat{\eta},\hat{\xi}}[\exp\{\frac{\theta}{2}\int_{0}^{T-t}{g(X({s}),\hat{h}({s}),\hat{\eta}({s}),\hat{\xi}({s}),r(s+t);\theta)}ds\}v^{-\theta/2}].
\end{eqnarray}
\section{ An HJBI equation for game GII.}
{\bf \textit{\underline{Step 2:}}}\\
Let us now define a couplet process $Y^{h,(\eta,\xi)}(t) $ as \\
$dY^{h,(\eta,\xi)}(s)= \begin{pmatrix} dY_{0}(s) \\ dY_{1}(s) \end{pmatrix}=\begin{pmatrix} ds \\ dX(s) \end{pmatrix}=\begin{pmatrix} ds \\ 
(b+BX(s)-\Lambda (\eta^{'}(s)X(s)+\xi^{'}(s))-\frac{\theta}{2}\Lambda\Sigma^{'}h(s))dt+\Lambda dW^{h,\eta,\xi}({s}) \end{pmatrix}$\\
$Y_{0}(0)=s \in [0,T], Y_{1}(0)=y=(y^{1},...y^{n})$. The control process $h(s)=h(s,\omega)$ is assumed to be Markovian.
Then the process $Y^{h,(\eta,\xi)}(s)$  is a  Markov  process whose generator acting on a function $\tilde{u}(y) \in C^{1,2}_{0}((0,T) \times \mathbb{R}^{n})$ where ($C^{1,2}_{0}$ is the space of functions with compact support on $(0,T) \times \mathbb{R}^{n}$ such that it is once continuously differentiable in time and twice continuously differentiable in space variable $x$) is given by,
\begin{eqnarray}\label{3.1}
\tilde{\mathcal{A}}^{h,(\eta,\xi)}\tilde{{u}}(y) &=&\frac{\partial \tilde{u}(y)}{\partial s}+(b+Bx-\Lambda (\eta^{'}x+\xi^{'})-\frac{\theta}{2}\Lambda\Sigma^{'} h)^{'}D{\tilde{u}}(y) +\frac{1}{2}tr(\Lambda\Lambda^{*}D^{2}{\tilde{u}}(y)).
\end{eqnarray}
in which $D\tilde{u}(y) \triangleq (\frac{\partial \tilde{u}(y)}{\partial y_{1}^{1}},...,\frac{\partial \tilde{u}(y)}{\partial y_{1}^{n}})^{'}$ and $D^{2}\tilde{u}(y)$ is the matrix defined as $D^{2}\tilde{u}(y)\triangleq [\frac{\partial^{2}\tilde{u}(y)}{\partial y_{1}^{i}\partial y_{1}^{j}}],i,j=1,2,...,n.$\\
\indent By an application of the Feynman-Kac formula, it can be deduced that the HJB PDE for $\tilde{u}(y)$ is given by
\begin{eqnarray}\label{3.2}
\bigg(\tilde{\mathcal{A}}^{\hat{h},(\hat{\eta},\hat{\xi})}+\frac{\theta}{2}g(x,\hat{h}(y),\hat{\eta},\hat{\xi},r;\theta)\bigg){\tilde{u}}(y)=0.
\end{eqnarray}
The following  proposition presents a diagnostic to identify a  solution to the game ({\bf {GII}}).\\
\\
\begin{prop}
Define $\mathcal{S}=(0,T)\times \mathbb{R}^{n}$. Let  there exists a  function $\tilde{w}$ $\in$ $\mathcal{C}^{1,2}({\mathcal{S}}) \cap \mathcal{C}(\bar{\mathcal{S}}) $. Suppose there exists (Markov) control $\hat{h} \in {\mathcal{H}(T)}$  and deterministic controls $(\hat{\eta}, \hat{\xi})\in {\mathcal{O}}(T)$ such that for each $y \in \mathcal{S}$,\\
1. $({\mathcal{\tilde{A}}}^{{h},(\hat{\eta},\hat{\xi})}+\frac{\theta}{2}g(x,h,\hat{\eta},\hat{\xi},r;\theta))[(\tilde{w}(y))] \geq 0~ \forall~ h \in \mathbb{R}^{m}$;\\
2. $({\mathcal{\tilde{A}}}^{\hat{h}(y),({\eta,\xi})}+\frac{\theta}{2}g(x,\hat{h}(y),\eta,\xi,r;\theta))[(\tilde{w}(y))]  \leq 0~ \forall~\eta \in  \mathbb{R}^{n \times (n+m)}, \xi \in \mathbb{R}^{1 \times (n+m)} $;\\
3. $({\mathcal{\tilde{A}}}^{\hat{h}(y),(\hat{\eta},\hat{\xi})}+\frac{\theta}{2}g(x,\hat{h}(y),\hat{\eta},\hat{\xi},r;\theta))[(\tilde{w}(y))]  = 0$;\\
4. $[(\tilde{w}(T,X_{T}))]={v}^{-\theta/2}$.\\
5. $E^{{h},{\eta,\xi}}[\int_{0}^{T-t}{D\tilde{w}^{'}(t+s,X({s}))\Lambda}e^{\tilde{Z}_{s}}dW^{h,\eta,\xi}_{s}]=0 ~ \forall~ h \in \mathbb{R}^{m},\forall~\eta \in  \mathbb{R}^{n \times (n+m)}, \xi \in \mathbb{R}^{1 \times (n+m)} $;\\
where,
\begin{eqnarray}\label{3.3}
\tilde{Z}({s})=\tilde{Z}_{s}(h,\eta,\xi):=\frac{\theta}{2}\bigg\{\int_{0}^{s}{g(X(\tau),h(\tau),\eta(\tau),\xi(\tau),r({t+\tau});\theta)}d{\tau}\bigg\}.
\end{eqnarray}
Define ,
\begin{eqnarray*}
\tilde{I}(v,x,h,\eta,\xi,t,T;\theta)&=&\exp(-\frac{\theta}{2}I(v,x,h,\eta,\xi,t,T;\theta))\\
&=& E^{h,\eta,\xi}[\exp\{\frac{\theta}{2}\int_{0}^{T-t}{g(X({s}),h({s}),\eta({s}),\xi({s}),r({s+t});\theta)}ds\}v^{-\theta/2}].
\end{eqnarray*}
then,
\begin{eqnarray*}
\tilde{u}(0,x)=\tilde{w}(0,x)=\tilde{I}({v,x,\hat{h},\hat{\eta},\hat{\xi},0,T};\theta)& = & \inf_{h \in {\mathcal{H}}(T)}\{\sup_{(\eta,\xi) \in \mathcal{O}(T)}[\tilde{I}({v,x,{h},{\eta,\xi},0,T};\theta)]\},  \\
&=& \sup_{(\eta,\xi) \in \mathcal{O}(T)}\{\inf_{h \in {\mathcal{H}}(T)}[\tilde{I}({v,x,{h},{\eta,\xi},0,T};\theta)]\},  \\
&=& \sup_{(\eta,\xi) \in \mathcal{O}(T)}\tilde{I}({v,x,\hat{h},{(\eta,\xi)},0,T};\theta),  \\
&=& \inf_{h \in {\mathcal{H}}(T)}\tilde{I}({v,x,{h},\hat{\eta},\hat{\xi},0,T};\theta)=\tilde{I}({v,x,\hat{h},\hat{\eta},\hat{\xi},0,T};\theta).
\end{eqnarray*}
and ($\hat{h},(\hat{\eta},\hat{\xi})$) is a saddle point equilibrium.
\end{prop}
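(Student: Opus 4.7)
The plan is a standard HJBI verification argument. Apply It\^o's formula to the process $M_{s}\triangleq \tilde{w}(t+s,X(s))\exp(\tilde{Z}_{s})$ under $\mathbb{P}^{h,\eta,\xi}$, using the dynamics (\ref{2.8}) for $X$ and the definition (\ref{3.3}) for $\tilde{Z}_{s}$, whose differential is $\frac{\theta}{2}g(X(s),h(s),\eta(s),\xi(s),r(t+s);\theta)\,ds$. A direct computation produces
\begin{eqnarray*}
dM_{s} = e^{\tilde{Z}_{s}}\Bigl(\tilde{\mathcal{A}}^{h,(\eta,\xi)}\tilde{w}+\tfrac{\theta}{2}g\,\tilde{w}\Bigr)(t+s,X(s))\,ds + e^{\tilde{Z}_{s}}D\tilde{w}^{\prime}(t+s,X(s))\Lambda\,dW^{h,\eta,\xi}_{s}.
\end{eqnarray*}
Integrating from $0$ to $T-t$ and taking expectation under $\mathbb{P}^{h,\eta,\xi}$, the stochastic integral drops by hypothesis~5, and the terminal value simplifies by hypothesis~4 to $\exp(\tilde{Z}_{T-t})\,v^{-\theta/2}$.

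First I would prove the upper bound. Freeze the market strategy at $(\hat{\eta},\hat{\xi})$ and let $h\in\mathcal{H}(T)$ vary. By hypothesis~1 the drift in $dM_{s}$ is nonnegative for every admissible $h$, so
\begin{eqnarray*}
\tilde{w}(t,x) \;\le\; E^{h,\hat{\eta},\hat{\xi}}\!\left[\exp\!\Bigl(\tfrac{\theta}{2}\!\int_{0}^{T-t}\!g(X(s),h(s),\hat{\eta}(s),\hat{\xi}(s),r(s+t);\theta)ds\Bigr)v^{-\theta/2}\right] \;=\; \tilde{I}(v,x,h,\hat{\eta},\hat{\xi},t,T;\theta).
\end{eqnarray*}
Taking infimum over $h$ and then supremum over $(\eta,\xi)$ yields $\tilde{w}(t,x)\le \sup_{(\eta,\xi)}\inf_{h}\tilde{I}$. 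Symmetrically, freezing at $\hat{h}$ and using hypothesis~2 that the drift is nonpositive for every $(\eta,\xi)\in\mathcal{O}(T)$, I would get
\begin{eqnarray*}
\tilde{w}(t,x) \;\ge\; \tilde{I}(v,x,\hat{h},\eta,\xi,t,T;\theta)\quad\forall (\eta,\xi)\in\mathcal{O}(T),
\end{eqnarray*}
so $\tilde{w}(t,x)\ge \sup_{(\eta,\xi)}\tilde{I}(v,x,\hat{h},\eta,\xi,t,T;\theta)\ge \inf_{h}\sup_{(\eta,\xi)}\tilde{I}$.

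Next I would close the loop at the saddle point using hypothesis~3: the drift vanishes identically at $(\hat{h},\hat{\eta},\hat{\xi})$, so the It\^o identity gives $\tilde{w}(t,x)=\tilde{I}(v,x,\hat{h},\hat{\eta},\hat{\xi},t,T;\theta)$ with equality. Combining with the universal inequality $\sup\inf\le\inf\sup$, every bound in the chain collapses to a single value. Specializing to $t=0$ delivers the four displayed equalities of the proposition and identifies $(\hat{h},(\hat{\eta},\hat{\xi}))$ as a saddle point.

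The main technical obstacle is hypothesis~5 itself: verifying in applications that the candidate $\tilde{w}$ grows mildly enough (together with $X$, which is a conditionally Gaussian process under $\mathbb{P}^{h,\eta,\xi}$ thanks to (\ref{2.8})) that $D\tilde{w}^{\prime}\Lambda\,e^{\tilde{Z}_{\cdot}}$ is a true $\mathbb{P}^{h,\eta,\xi}$-martingale integrand. A second delicate point is that the HJBI inequalities in hypotheses~1--2 are pointwise in $h\in\mathbb{R}^{m}$ and $(\eta,\xi)\in\mathbb{R}^{n\times(n+m)}\times\mathbb{R}^{1\times(n+m)}$, while the game is played over $\mathcal{H}(T)$ and $\mathcal{O}(T)$; one must therefore argue that pointwise minimization/maximization produces measurable feedback selectors $\hat{h}(y)$ and that constants $(\hat{\eta},\hat{\xi})$ lie in the admissible classes, something that will have to be checked downstream when the explicit $\tilde{w}$ is constructed.
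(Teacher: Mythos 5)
Your proposal is correct and follows essentially the same route as the paper's own proof: apply It\^o's formula to $\tilde{w}(t+s,X(s))e^{\tilde{Z}_{s}}$, kill the stochastic integral via hypothesis~5, use hypotheses~1, 2, 3 and the terminal condition~4 to sandwich $\tilde{w}(0,x)$ between $\sup\inf$ and $\inf\sup$, and close with the universal inequality $\sup\inf\leq\inf\sup$. The technical caveats you flag (integrability of the martingale integrand, admissibility of the feedback selectors) are exactly the points the paper defers to Section~4 (Lemma~4.4 and the explicit quadratic ansatz for $u$).
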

\begin{proof}
Apply  Ito's formula to $\tilde{w}(s,X({s}))e^{\tilde{Z}({s})}$ to obtain
\begin{eqnarray}\label{3.4}
\tilde{w}(T,X({T-t}))e^{\tilde{Z}{(T-t)}}&=&\tilde{w}(t,x)\no\\
&+&\int_{0}^{T-t}{((\tilde{\mathcal{A}}^{h,\eta,\xi}+\frac{\theta}{2}g(X({s}),h(X(s)),\eta(s),\xi(s),r({s+t});\theta))\tilde{w}(t+s,X(s)))e^{\tilde{Z}_{s}})}ds\no\\
&+&\int_{0}^{T-t}{(D\tilde{w}^{'}(t+s,X({s}))\Lambda)e^{\tilde{Z}({s})}}dW^{h,\eta,\xi}({s}).
\end{eqnarray}
Taking expectation with respect to $\mathbb{P}^{h,\eta,\xi}$ ,  from condition (5) the Proposition  , the stochastic integral in (\ref{3.4}) vanishes.  Now setting $t=0$ and further applying  condition (1) and (4) again of the Proposition , we get
\begin{eqnarray*}
E^{{h},\eta,\xi}[\tilde{w}(T,X_{T})e^{\tilde{Z}_{T}}]\geq \tilde{w}(0,x).
\end{eqnarray*}
Since this inequality is true for all $h \in {\mathcal{H}}(T)$ we have
\begin{eqnarray*}
\inf_{h \in {\mathcal{H}}(T)}E^{{h},{\eta,\xi}}[v^{-\theta/2}e^{\tilde{Z}_{T}}]\geq \tilde{w}(0,x).
\end{eqnarray*}
Hence we have,
\begin{eqnarray}\label{3.5}
\sup_{(\eta,\xi) \in \mathcal{O}(T)}\inf_{h \in {\mathcal{H}}(T)}E^{{h},{\eta,\xi}}[v^{-\theta/2}e^{\tilde{Z}_{T}}]\geq \inf_{h \in {\mathcal{H}}(T)}E^{{h},\eta,\xi}[v^{-\theta/2}e^{\tilde{Z}_{T}}]\geq  \tilde{w}(0,x).
\end{eqnarray}
Similarly, setting $t=0$ we get using conditions (5),(2) and (4) of the Proposition ,  we get the following upper value of the game, viz.\\
\begin{eqnarray}\label{3.6}
\inf_{h \in {\mathcal{H}}(T)}\sup_{(\eta,\xi) \in \mathcal{O}(T)}E^{{h},{\eta,\xi}}[v^{-\theta/2}e^{\tilde{Z}_{T}}]\leq
\sup_{(\eta,\xi) \in \mathcal{O}(T)}E^{{h},{\eta,\xi}}[v^{-\theta/2}e^{\tilde{Z}_{T}}]\leq \tilde{w}(0,x).
\end{eqnarray}
Also , setting $t=0$ and using conditions (5), (3) and (4) of the Proposition  we get,
\begin{eqnarray}\label{3.7}
E^{\hat{h},(\hat{\eta},\hat{\xi})}[\tilde{w}(T,X_{T})e^{\tilde{Z}_{T}}]&=& \tilde{w}(0,x)\no\\
&=&E^{\hat{h},(\hat{\eta},\hat{\xi})}[\exp\{\frac{\theta}{2}\int_{0}^{T}{g(X({s}),\hat{h}(X({s})),\hat{\eta}({s}),\hat{\xi}({s}),r(s);\theta)}ds\}v^{-\theta/2}].
\end{eqnarray}
From (\ref{3.5}), (\ref{3.6}) and (\ref{3.7}), and  that \\$\sup_{(\eta,\xi) \in \mathcal{O}(T)}\inf_{h \in \mathcal{H}(T)}[v^{-\theta/2}e^{\tilde{Z}_{T}}] \leq \inf_{h \in \mathcal{H}(T)} \sup_{(\eta,\xi) \in \mathcal{O}(T)}[v^{-\theta/2}e^{\tilde{Z}_{T}}]$ automatically holds,  the conclusion now follows.
\end{proof}
We now return to the game problem involving $u$ as the payoff function.

\begin{corollary}
$\underline{u}(0,x)=\bar{u}(0,x)=u(0,x)$
\end{corollary}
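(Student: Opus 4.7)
The plan is to deduce the corollary from Proposition 1 by transferring the saddle-point equality for $\tilde{u}$ through the monotone bijection $u = -\frac{2}{\theta}\log \tilde{u}$. The relation $I = -\frac{2}{\theta}\log \tilde{I}$ holds pointwise in $(h,\eta,\xi)$, and $-\frac{2}{\theta}\log(\cdot)$ is strictly monotone on $(0,\infty)$; therefore every inner and outer $\sup$ or $\inf$ in the definitions of $\bar{u}(0,x)$ and $\underline{u}(0,x)$ can be commuted with the transformation, possibly swapping $\sup \leftrightarrow \inf$ depending on the sign of $\theta$.

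Concretely, for $\theta>0$ the transformation is strictly decreasing, so
\begin{eqnarray*}
\bar{u}(0,x) &=& \sup_{h \in \mathcal{H}(T)}\inf_{(\eta,\xi)\in\mathcal{O}(T)} \left(-\tfrac{2}{\theta}\log \tilde{I}\right) \;=\; -\tfrac{2}{\theta}\log \inf_{h}\sup_{(\eta,\xi)}\tilde{I}, \\
\underline{u}(0,x) &=& \inf_{(\eta,\xi)}\sup_{h}\left(-\tfrac{2}{\theta}\log \tilde{I}\right) \;=\; -\tfrac{2}{\theta}\log \sup_{(\eta,\xi)}\inf_{h}\tilde{I}.
\end{eqnarray*}
Proposition 1 supplies $\inf_{h}\sup_{(\eta,\xi)}\tilde{I} = \sup_{(\eta,\xi)}\inf_{h}\tilde{I} = \tilde{u}(0,x)$, whence both quantities equal $-\frac{2}{\theta}\log \tilde{u}(0,x)$, which by the very definition of $u$ via $\tilde{u}=\exp(-\tfrac{\theta}{2}u)$ is exactly $u(0,x) = I(v,x,\hat{h},\hat{\eta},\hat{\xi},0,T;\theta)$. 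For $\theta \in (-2,0)$ the transformation $-\frac{2}{\theta}\log(\cdot)$ is strictly increasing, so the $\sup$ and $\inf$ pass through without swapping; one again reads off the common value from Proposition 1, and the saddle point $(\hat{h},\hat{\eta},\hat{\xi})$ exhibits it.

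The only step requiring care is the bookkeeping around the sign of $\theta$: one must verify that the particular swap (or non-swap) of $\sup$ and $\inf$ induced by $-\frac{2}{\theta}\log$ matches the order in which Proposition 1 delivers its two equalities, and that the pointwise identity $I = -\frac{2}{\theta}\log \tilde{I}$ is applied before taking extrema (not after). Once this alignment is made, the corollary is immediate and no further stochastic analysis is needed; it is a corollary in the strict sense, confirming that $(\hat{h},\hat{\eta},\hat{\xi})$ is simultaneously a saddle point for the original game on $u$.
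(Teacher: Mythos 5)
Your argument is correct and is essentially the paper's own proof: both transfer the saddle-point identity of Proposition 3.1 to $u$ through the strict monotonicity of $\tilde{u}=\exp(-\frac{\theta}{2}u)$, your version merely being more explicit about how $\sup$ and $\inf$ swap under the decreasing map $-\frac{2}{\theta}\log(\cdot)$ when $\theta>0$. One caveat on your final paragraph: for $\theta\in(-2,0)$ the map is increasing and the extrema do \emph{not} swap, so Proposition 3.1 as stated (which equates $\inf_{h}\sup_{(\eta,\xi)}\tilde{I}$ with $\sup_{(\eta,\xi)}\inf_{h}\tilde{I}$) delivers the wrong ordering of players and your own ``bookkeeping'' check would fail there; however, the game ({\bf GII}) is set up only for the risk-averse case $\theta>0$, so this does not affect the regime the paper actually treats.
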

\begin{proof}
The value function $u$ and $\tilde{u}$ are related through the strictly monotone continuous transformation $\tilde{u}(t,x)=\exp(-\frac{\theta}{2}u(t,x))$. Thus admissible(Optimal) strategies for the exponentially transformed problem $\tilde{u}$ obtained via Proposition 3.1 are also admissible(optimal) for the  problem $u$. In other words,
\begin{eqnarray*}
u(0,x)& = & \sup_{h \in {\mathcal{H}}(T)}\inf_{(\eta,\xi) \in \mathcal{O}(T)}\{[{I}({v,x,{h},{\eta},\xi,0,T};\theta)]\},  \\
&=& \inf_{(\eta,\xi) \in \mathcal{O}(T)}\{\sup_{h \in {\mathcal{H}}(T)}[{I}({v,x,{h},{\eta},\xi,0,T};\theta)]\},  \\
&=& \inf_{(\eta,\xi) \in \mathcal{O}(T)}{I}({v,x,\hat{h},{\eta},\xi,0,T};\theta),  \\
&=& \sup_{h \in {\mathcal{H}}(T)}{I}({v,x,{h},\hat{\eta},\hat{\xi},0,T};\theta)={I}({v,x,\hat{h},\hat{\eta},\hat{\xi},0,T};\theta).
\end{eqnarray*}
Hence $\underline{u}(0,x)=\bar{u}(0,x)=u(0,x)$.
\end{proof}
\section{ Solving  game GII.}
{\bf \textit{\underline{{ Step 3 :}}}}\\
\indent We seek to find the game payoff function $u$ for the game  that would  satisfy all the conditions of our verification lemma given by Proposition 3.1 in terms of $u$.  The  Conditions (1)-(4) of verification lemma could be written in the compact form in terms of $u(t,x)$ as
\begin{eqnarray}\label{4.1}
{\mathcal{A}}^{\hat{h},\hat{\eta},\hat{\xi}}{u}(t,x)=0,\no\\
{u}(T,x)=\log{v}.
\end{eqnarray}
where the operator ${\mathcal{A}}^{{h},{\eta,\xi}}{u}(t,x)$  for any $h \in \mathbb{R}^{m}$ and $\eta \in \mathbb{R}^{n \times (n+m)}$, $\xi \in \mathbb{R}^{1 \times (n+m)}$ is given by,
\begin{eqnarray}\label{4.2}
{\mathcal{A}}^{{h},{\eta,\xi}}{u}(t,x)&=&\frac{\partial {u}(t,x)}{\partial t} +(b+Bx-\Lambda(\eta^{'}(s)X(s)+\xi^{'}(s)))-\frac{\theta}{2}\Lambda(\Sigma^{'}h))^{'}Du(t,x)+\frac{1}{2}tr(\Lambda\Lambda^{'}D^{2}u(t,x))\no\\
&-&\frac{\theta}{4}(Du(t,x))^{'}\Lambda\Lambda^{'}Du(t,x)-g(x,h,\eta,\xi,r;\theta).
\end{eqnarray}

The first order condition for $\hat{h}$ that maximizes $\mathcal{A}^{{h},\hat{\eta},\hat{\xi}}$ over all $\mathcal{H}(T)$ is given by,
\begin{eqnarray}\label{4.3}
\hat{h}({t})=\frac{2}{(\theta+2)}(\Sigma\Sigma^{'})^{-1}[\delta({t})-\Sigma(\hat{\eta}^{'}(t)X(t)+\hat{\xi}^{'})-\frac{\theta}{2}\Sigma\Lambda^{'}D u(t,x)].
\end{eqnarray}
Substituting (\ref{2.5}) in (\ref{4.2}) we obtain an  expression for the operator $\mathcal{A}^{h,\eta,\xi}$  in  ${\eta}^{'}(t)$ and ${\xi}^{'}(t)$. We minimize $\mathcal{A}^{h,\eta,\xi}$ over the set of controls $\mathcal{O}(T)$. As this operator is linear in ${\eta}^{'}(t)$ and ${\xi}^{'}(t)$,  we guess that the coefficient of the terms ${\eta}^{'}(t)$ and ${\xi}^{'}(t)$ vanish\footnote{Note that $\eta$ and $\xi$ are bounded, and the resulting conditions applying this guess must also gurantee that the factor process is indeed stable under the MMM. We show that in Remarks 2.1, 4.1 and  4.2.} leading to
\begin{eqnarray*}
\hat{h}(t)=-(\Sigma\Sigma^{'})^{-1}\Sigma\Lambda^{'}Du(t,x).
\end{eqnarray*}
Motivated by Kuroda and Nagai \cite{KN}, we will try the functional form for $u$ given by $u(t,x)=\frac{1}{2}x^{T}Q({t})x+ q^{T}({t})x + k({t})$ where $Q$ is an $n \times n$ symmetric matrix, $q$ is a n-element column vector and $k$ is a scalar. Hence
\begin{eqnarray}\label{4.4}
\hat{h}(t)=-(\Sigma\Sigma^{'})^{-1}\Sigma\Lambda^{'}(Q(t)X(t)+q(t)).
\end{eqnarray}
This when substituted in (\ref{4.3}) yields,
\begin{eqnarray}\label{4.5}
-\Sigma\Lambda^{'}(Q(t)X(t)+q(t))=\delta(t)-\Sigma(\hat{\eta}^{'}(t)X(t)+\hat{\xi}^{'}(t)).
\end{eqnarray}
which further yields,
\begin{equation}\label{4.6}
 \left.\begin{aligned}
        \hat{\eta}(t)=(Q^{'}(t)\Lambda\Sigma^{'}+A^{'})(\Sigma\Sigma^{'})^{-1}\Sigma ,\\
        \hat{\xi}(t)=\bigg((a-r(t)1)^{'}+q^{'}(t)\Lambda\Sigma^{'}\bigg)(\Sigma\Sigma^{'})^{-1}\Sigma.
       \end{aligned}
 \right\}
\end{equation}
Thus $\hat{h}$ is a local maximizing control and  ($\hat{\eta},\hat{\xi}$) is  a local minimizer control that constitutes the saddle-point equilibrium for game ({\bf GII}).\\
\begin{rem}
From Remark 2.1 and equation (\ref{4.6}), it can be seen that $\mathbb{P}^{\hat{\eta},\hat{\xi}}$ is a minimal martingale measure provided $Q^{'}(t)\Lambda\Sigma^{'}(\Sigma\Sigma^{'})^{-1}\Sigma=0$ and $q^{'}(t)\Lambda\Sigma^{'}(\Sigma\Sigma^{'})^{-1}\Sigma=0$ for $t \leq T$.
\end{rem}
\begin{rem}
From Remark 4.1, and equation (\ref{4.4}) it is clear that if the game equilibrium measure corresponds to the minimal martingale measure then the  optimal investor strategy satisfies $\hat{h}^{'}(t)\Sigma=0$. Hence if the portfolio model does not permit short selling then the optimal investor strategy at game equilibrium is the no-regret strategy i.e ($\hat{h}(t)$=0).
\end{rem}
\begin{rem}
In the case where  the factor process and the security(stock) price process has independent noise i.e $\Sigma \Lambda^{'}$=0 , then from Remarks 4.1-4.2, it is obvious that at optimality, the worst case strategy is the no-regret strategy and the factor process always evolves under the minimal martingale measure since the game equilibrium measure is the minimal martingale measure.
\end{rem}
 \indent As like in Kuroda and Nagai \cite{KN}, we can verify that  $u(t,x)=\frac{1}{2}x^{'}Q(t)x+q^{'}(t)x+k(t)$ satisfies the  HJB PDE  i.e conditions (1)-(4) of the Proposition 3.1 provided\\
$\bullet$~~an $n \times n$ symmetric non-negative matrix $Q$ satisfies the following matrix Riccati equation given as
\begin{eqnarray}\label{4.7}
\frac{dQ(t)}{dt}-Q(t)K_{0}Q(t)+K_{1}^{'}Q(t)+Q(t)K_{1}=0~~~0 \leq t \leq T,~~Q(T)=0.
\end{eqnarray}
{where}
\begin{eqnarray*}
K_{0}&=&\frac{\theta}{2}\Lambda\bigg(I-\frac{\theta-2}{\theta}\Sigma^{'}(\Sigma\Sigma^{'})^{-1}\Sigma\bigg)\Lambda^{'},\\
K_{1}&=&B-\Lambda\eta^{'}(t)-\Lambda\Sigma^{'}{(\Sigma\Sigma^{'})}^{-1}A+\Lambda \Sigma^{'}{(\Sigma\Sigma^{'})}^{-1}\Sigma \eta^{'}.\\
\end{eqnarray*}
$\bullet$~~The $n$ element column vector $q(t)$satisfies the following linear ordinary differential equation \\
for $0 \leq t \leq T$.
\begin{eqnarray}\label{4.8}
\frac{d q(t)}{dt}&+&(K_{1}^{'}-Q(t)K_{0})q(t)+Q(t)b-Q^{'}(t)\Lambda \Sigma^{'}(\Sigma\Sigma^{'})^{-1}(a-r(t)1)\no\\
&+&Q^{'}(t)\Lambda\Sigma^{'}(\Sigma\Sigma^{'})^{-1}\Sigma\xi^{'}(t)-Q^{'}(t)\Lambda\xi^{'}(t)=0,\no\\
q(T)&=&0.
\end{eqnarray}
$\bullet$ and the constant $k(t)$  is a solution to
\begin{eqnarray}\label{4.9}
\frac{dk(t)}{dt}&+& b^{'}q(t)+\frac{\theta-2}{4}q^{'}(t)\Lambda\Sigma^{'}(\Sigma\Sigma^{'})^{-1}\Sigma\Lambda^{'}q(t)\no\\
&+&r-q^{'}(t)\Lambda\Sigma^{'}(\Sigma\Sigma^{'})^{-1}(a-r(t)1)+q^{'}(t)\Lambda\Sigma^{'}(\Sigma\Sigma^{'})^{-1}\Sigma\xi^{'}(t)\no\\
&-&\xi(t)\Lambda^{'}q(t)+\frac{2-\theta}{4}q^{'}(t)\Lambda\Lambda^{'}q(t)=0\no\\
, \forall 0 \leq t \leq T, \no\\
k(T)&=&\log(v).
\end{eqnarray}
The fourth condition of Proposition 3.1 is  obvious from the terminal conditions of $Q$, $q$ and $k$. To  show that condition (5) of Proposition 3.1 is satisfied by the choice of our payoff function, we need to show that $E^{h,(\eta,\xi)}({<D\tilde{u}~\Lambda e^{Z}, D\tilde{u}~\Lambda e^{Z}>}_{t})< \infty $ $\forall t \in [0,T]$ where $<\cdot,\cdot>$ as usual symbolizes quadratic co-variation.  To show this we argue as follows. Processes $Q\triangleq (Q(t))_{0 \leq t \leq T}$ and $q \triangleq (q(t))_{0 \leq t \leq T}$ are bounded since they are continuous on the compact support $[0,T]$. By standard existence-uniqueness argument for stochastic differential equation (refer Gihman and Skorokhod \cite{GS}), $X \in$ $ L^{2}(\mathbb{P}^{h,(\eta,\xi)})$. Since $D \tilde{u}$ is linear in $X$ with controls ($\eta,\xi$) assumed bounded, we also have that $D\tilde{u} \in$ $ L^{2}(\mathbb{P}^{h,\eta,\xi})$. To complete the argument it remains to be shown that $\tilde{u}$ is bounded which we show now.
\begin{lem}
$0< \tilde{u} < \exp(-\frac{\theta}{2}\int_{0}^{T-t}{r{(s+t)}}ds)v^{-\theta/2}$.
\end{lem}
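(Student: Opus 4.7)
The plan is to establish the two bounds separately and exploit the min–max representation of $\tilde u$ in (\ref{2.11}).

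For the lower bound, note that at the saddle point $(\hat h,\hat\eta,\hat\xi)$ the random variable inside the expectation,
$\exp\bigl(\tfrac{\theta}{2}\int_0^{T-t} g(X(s),\hat h(s),\hat\eta(s),\hat\xi(s),r(s+t);\theta)\,ds\bigr)\,v^{-\theta/2}$,
is a product of a strictly positive exponential and the deterministic positive constant $v^{-\theta/2}$, so it is almost surely strictly positive; taking $\mathbb{P}^{\hat h,\hat\eta,\hat\xi}$-expectation therefore yields $\tilde u(t,x)>0$.

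For the upper bound, I would use the outer-$\inf_h$ characterization in (\ref{2.11}) and evaluate the inner quantity at the admissible no-regret strategy $h\equiv 0$. Looking back at the definition of $g$ in (\ref{2.5}), every term containing $h$ has $h$ as a factor, so $g(x,0,\eta,\xi,r;\theta)=-r$. Consequently, for any admissible $(\eta,\xi)\in\mathcal{O}(T)$,
\begin{equation*}
E^{0,\eta,\xi}\!\left[\exp\!\Bigl(\tfrac{\theta}{2}\!\int_0^{T-t}\! g(X(s),0,\eta(s),\xi(s),r(s+t);\theta)\,ds\Bigr)v^{-\theta/2}\right]=\exp\!\Bigl(-\tfrac{\theta}{2}\!\int_0^{T-t}\! r(s+t)\,ds\Bigr)v^{-\theta/2},
\end{equation*}
a deterministic quantity independent of $(\eta,\xi)$. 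Hence the inner supremum over $\mathcal{O}(T)$ leaves this value unchanged, and the outer infimum over $\mathcal{H}(T)\ni h=0$ gives the desired upper estimate.

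To upgrade the upper bound to a strict inequality, I would invoke the saddle-point structure of the previous section: the optimizer $\hat h$ exhibited in (\ref{4.4}) is the strict (quadratic) minimizer of the operator $\mathcal{A}^{h,\hat\eta,\hat\xi}$ (the Hessian is the strictly positive-definite matrix $\Sigma\Sigma'$ weighted by $\tfrac{\theta+2}{2}>0$ since $\theta>-2$), so the saddle value is strictly smaller than the value at $h\equiv0$ unless $\hat h\equiv 0$, which, by (\ref{4.4}), would require $\Sigma\Lambda'(Q(t)X(t)+q(t))\equiv 0$, a degenerate situation not holding generically; in particular the strict inequality $\tilde u<\exp(-\tfrac{\theta}{2}\int_0^{T-t}r(s+t)\,ds)v^{-\theta/2}$ follows. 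I expect this strictness step to be the main technical nuisance — the $\le$ bound is two lines using the $h\equiv 0$ trick, but promoting to $<$ hinges on non-degeneracy of the Riccati solution $(Q,q)$ which the paper appears to treat implicitly.
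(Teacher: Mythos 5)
Your proof is correct and follows essentially the same route as the paper: plug the admissible strategy $h\equiv 0$ into the $\inf_h$ in (\ref{2.11}), observe that $g(x,0,\eta,\xi,r;\theta)=-r$ makes the integrand deterministic, and get positivity of the lower bound from positivity of the exponential. Your extra paragraph on promoting $\le$ to $<$ actually goes beyond the paper, whose own proof only establishes the non-strict inequality and leaves the strictness unaddressed.
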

\begin{proof}
From the definition of $\tilde{u}$ in (\ref{2.11}), for any optimal control $\mathcal{O}(T)$, the strategy $\hat{h}(t)= 0$ for $t \leq T$ is sub-optimal, and hence will provide an upper bound on $\tilde{u}$.  Hence from the definition of $g$ in equation (\ref{2.5}) to obtain the upper bound. Formally we write these statements as,
 \begin{eqnarray*}
\tilde{u}(t,x)&=&\inf_{h \in {\mathcal{H}}(T)}E^{h,\hat{\eta},\hat{\xi}}[\exp\{\frac{\theta}{2}\int_{0}^{T-t}{g(X({s}),h({s}),\hat{\eta}({s}),\hat{\xi}({s}),r(s+t);\theta)}ds\}v^{-\theta/2}],\\
&\leq & E^{0,\hat{\eta},\hat{\xi}}[\exp\{\frac{\theta}{2}\int_{0}^{T-t}{g(X({s}),0,\hat{\eta}({s}),\hat{\xi}({s}),r(s+t);\theta)}ds\}v^{-\theta/2}],\\
&= & \exp(-\frac{\theta}{2}\int_{0}^{T-t}{r({s+t})}ds)v^{-\theta/2}.
 \end{eqnarray*}
 Hence the conclusion follows.
\end{proof}
We now formalize the solution to this game ({\bf GI}).\\
{\bf \textit{ \underline{Step 4:}}}\\
We first show that the  controls belonging to $ \mathcal{H}(T)$ and $\mathcal{O}(T)$ satisfy the following change of measure criterion.\\
\begin{lem}
From  the choice of space of controls  $h \in \mathcal{H}(T)$ and $(\eta,\xi) \in \mathcal{O}(T)$, we have
\begin{eqnarray}\label{4.10}
E[\mathcal{E}\bigg(-\frac{\theta}{2}\int_{0}{[(Q({t})X({t})+q({t}))\Lambda+{h}^{'}({t})\Sigma]dW^{\eta,\xi}({t})}\bigg)_{T}]=1.
\end{eqnarray}
\end{lem}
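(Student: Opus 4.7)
The plan is to verify Novikov's condition for the Doléans-Dade exponential, evaluated under the measure $\mathbb{P}^{\eta,\xi}$, since that is the measure relative to which $W^{\eta,\xi}$ is a standard Brownian motion. Writing the integrand as $\phi(t)\triangleq (Q(t)X(t)+q(t))\Lambda+h'(t)\Sigma$, it suffices to show
\begin{eqnarray*}
E^{\eta,\xi}\Big[\exp\Big(\tfrac{\theta^{2}}{8}\int_{0}^{T}|\phi(t)|^{2}\,dt\Big)\Big]<\infty.
\end{eqnarray*}
Once this is established, the standard Novikov argument (see, e.g., Bensoussan \cite{Ben}) yields that $\mathcal{E}(-\tfrac{\theta}{2}\int_{0}^{\cdot}\phi\,dW^{\eta,\xi})$ is a true $\mathbb{P}^{\eta,\xi}$-martingale and the identity (\ref{4.10}) follows; the expectation under $\mathbb{P}$ is recovered from the one under $\mathbb{P}^{\eta,\xi}$ because $d\mathbb{P}^{\eta,\xi}/d\mathbb{P}$ has finite moments of all orders thanks to Lemma 2.1 and the boundedness of $(\eta,\xi)\in\mathcal{O}(T)$.

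To produce the Novikov bound I would first apply the elementary inequality $|\phi|^{2}\le 2|(QX+q)\Lambda|^{2}+2|h'\Sigma|^{2}$ and then Cauchy--Schwarz, so that the task splits into showing
\begin{eqnarray*}
E^{\eta,\xi}\Big[\exp\Big(\tfrac{\theta^{2}}{2}\int_{0}^{T}|(Q(t)X(t)+q(t))\Lambda|^{2}dt\Big)\Big]<\infty \quad\text{and}\quad E^{\eta,\xi}\Big[\exp\Big(\tfrac{\theta^{2}}{2}\int_{0}^{T}|h'(t)\Sigma|^{2}dt\Big)\Big]<\infty.
\end{eqnarray*}
The second is essentially the admissibility condition $E[\exp(\tfrac{\theta^{2}}{2}\int_{0}^{T}h'\Sigma\Sigma'h\,dt)]<\infty$ built into $\mathcal{H}(T)$; to move it from $\mathbb{P}$ to $\mathbb{P}^{\eta,\xi}$ I would apply Hölder with an exponent $p>1$ chosen so that the admissibility condition still supplies the necessary finiteness after absorbing a power of $d\mathbb{P}^{\eta,\xi}/d\mathbb{P}$, which is controlled by Lemma 2.1.

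The first bound is the main obstacle and is handled by exploiting the Gaussianity of $X$ under $\mathbb{P}^{\eta,\xi}$. Equation (\ref{2.3}) exhibits $X$ as the solution of a linear SDE with bounded deterministic coefficients (since $\eta,\xi$ are continuous on $[0,T]$ and hence bounded), so $X$ is a Gaussian process with mean and covariance uniformly bounded on $[0,T]$, exactly as in the proof of Lemma 2.1. Using continuity of $Q(\cdot)$ and $q(\cdot)$ on the compact interval $[0,T]$ to obtain $|(Q(t)X(t)+q(t))\Lambda|^{2}\le C(1+|X(t)|^{2})$, the bound reduces to the exponential moment $E^{\eta,\xi}[\exp(C'\int_{0}^{T}|X(t)|^{2}dt)]<\infty$, which follows from Fernique's theorem applied to the Gaussian process $X$ on the finite horizon (or, equivalently, from a direct computation using the Gaussian density of $X(t)$ and a small-time splitting so that $C'T$ lies below the critical Fernique threshold). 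The compactness of $[0,T]$ and the boundedness of $(\eta,\xi)$ are precisely what keep all the constants under control; this is where the careful choice of the control class $\mathcal{O}(T)$ pays off.
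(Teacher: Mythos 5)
Your proposal is correct in substance and follows the same architecture as the paper's argument: establish a classical sufficient condition for the Dol\'eans--Dade exponential to be a true martingale, split the integrand by Cauchy--Schwarz into the Gaussian factor-process piece $(Q(t)X(t)+q(t))\Lambda$ and the control piece $h^{'}(t)\Sigma$, dispose of the first by the Gaussianity argument of Lemma 2.1 and of the second by the admissibility condition built into $\mathcal{H}(T)$. The one genuine difference is the choice of sufficient condition: the paper verifies Kazamaki's criterion, i.e.\ finiteness of $E[\exp(\frac{\theta}{2}\int_{0}^{t}\phi\,dW^{\eta,\xi})]$, which asks only for exponential moments of the stochastic integral itself, whereas you verify Novikov, which asks for exponential moments of the quadratic variation $\int|\phi|^{2}$. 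Novikov forces you through Fernique's theorem (or a small-interval splitting) for the $\int|X|^{2}$ term, while Kazamaki lets the paper reuse the Gaussian computation of Lemma 2.1 directly; on the other hand your route makes the required integrability of $h$ fully explicit, and you are more careful than the paper about which measure the expectation is taken under (the paper writes $E$ throughout even though $W^{\eta,\xi}$ is a Brownian motion only under $\mathbb{P}^{\eta,\xi}$). One bookkeeping caution: as written, your factor-of-two splitting followed by Cauchy--Schwarz inflates the $h$-exponent to exactly $\frac{\theta^{2}}{2}\int h^{'}\Sigma\Sigma^{'}h\,dt$, which exhausts the admissibility bound and leaves no slack for the subsequent H\"older step transferring the estimate from $\mathbb{P}$ to $\mathbb{P}^{\eta,\xi}$. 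Since Novikov only requires the exponent $\frac{\theta^{2}}{8}\int|\phi|^{2}$, you should instead use Young's inequality with weights $(1+\epsilon,1+\epsilon^{-1})$ and H\"older exponents close to $1$ on the $h$-factor; this keeps the final exponent strictly below $\frac{\theta^{2}}{2}$ and the argument closes. (The paper's own proof is no less loose on exactly this point.)
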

\begin{proof}
Above result holds if the following Kazamaki condition, \\ $E[\exp(\int_{0}^{t}{\theta(\frac{(Q({s})X({s})+q({s}))\Lambda+{h}^{'}({s})\Sigma}{2})}dW^{\eta,\xi}({s}))]<\infty$ $\forall~t \in [0,T]$ is satisfied. By an application of Cauchy-Schwartz inequality we have $\forall~ t \in [0,T]$ ,
\begin{eqnarray*}
&&E[\exp(\int_{0}^{t}{\theta(\frac{(Q({s})X({s})+q({s}))\Lambda+({h}^{'}({s})\Sigma)}{2})}dW^{\eta,\xi}({s}))] \no\\ 
&\leq&(E[e^{\int_{0}^{t}{\theta{(Q({s})X({s})+q({s}))\Lambda}}dW^{\eta,\xi}({s})}])^{1/2} \times {(E[e^{\int_{0}^{t}{\theta{{({h}^{'}({s})\Sigma)}}}dW^{\eta,\xi}({s})}])}^{1/2}
\end{eqnarray*}
Since $X$ is Gaussian process, mimicking arguments similar to Lemma 2.1, we have that \\ $(E[e^{\int_{0}^{t}{\theta{(Q({s})X({s})+q({s}))\Lambda}}dW^{\eta,\xi}_{s}}])^{1/2}<\infty $ $\forall~t \in [0,T]$. From assumption on the space of controls $\mathcal{H}(T)$, one can conclude that ${(E[e^{\int_{0}^{t}{\theta{{({h}^{'}({s})\Sigma)}}}dW^{\eta,\xi}({s})}])}^{1/2}<\infty$ for $ t \in [0,T]$. Hence the Kazamaki condition holds true and the conclusion follows.
\end{proof}
\indent We now show that the saddle-point equilibrium controls obtained by solving game ({\bf GII}) is  in fact also a saddle-point equilibrium for the original game problem ({\bf GI}).\\
\begin{prop}
If there exist a solution $Q$ to the matrix Ricatti equation (\ref{4.7}) , then the  saddle point equilibrium strategies $\hat{h}$ and $(\hat{\eta},\hat{\xi})$ obtained from (\ref{4.4}) and (\ref{4.6}) respectively as a result of solving the auxiliary game ({\bf GII}) where $q$  is a solution to (\ref{4.8}) and and $k$ is a solution of (\ref{4.9}) is in fact also the saddle-point equilibrium for the finite horizon game ({\bf GI}), namely,
\begin{eqnarray*}
 \sup_{h \in \hat{\mathcal{H}}(T)}\inf_{(\eta,\xi)\in \hat{\mathcal{O}}(T)}\tilde{J}(v,h,\eta,\xi,T;\theta)&=& \inf_{(\eta,\xi)\in \hat{\mathcal{O}}(T)}\sup_{h \in \hat{\mathcal{H}}(T)}\tilde{J}(v,h,\eta,\xi,T;\theta),\\
&=&\tilde{J}(v,\hat{h},\hat{\eta},\hat{\xi},T;\theta,)\\
&=&\frac{1}{2}x^{'}Q(0)x+q^{'}(0)x+k(0).
 \end{eqnarray*}
where,
\begin{eqnarray*}
\tilde{J}(v,h,\eta,\xi,T;\theta)\triangleq \frac{-2}{\theta}\log E^{\eta,\xi}[\exp{[\frac{-\theta}{2}\log{V^{h,\eta,\xi}({T})}]}].
\end{eqnarray*}
 \end{prop}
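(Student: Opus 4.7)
The plan is to show that the two games GI and GII have identical objective functionals on the same admissible sets, so that the saddle point and the common value already identified for GII via Proposition 3.1 and its corollary transfer verbatim to GI. Since the corollary yields $u(0,x) = I(v,x,\hat h,\hat\eta,\hat\xi,0,T;\theta)$, and the ansatz of Section 4 identifies $u(0,x)=\tfrac12 x'Q(0)x+q'(0)x+k(0)$, the proposition reduces to proving $\tilde J(v,h,\eta,\xi,T;\theta) = I(v,x,h,\eta,\xi,0,T;\theta)$ pointwise on $\mathcal H(T)\times\mathcal O(T)$.

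To carry out this reduction, I first integrate equation (2.6) from $0$ to $T$ and exponentiate, obtaining
\begin{equation*}
\exp\!\bigl(-\tfrac{\theta}{2}\log V^{h,\eta,\xi}(T)\bigr)
= v^{-\theta/2}\,\exp\!\Bigl(\tfrac{\theta}{2}\!\int_0^T\! g\bigl(X(s),h(s),\eta(s),\xi(s),r(s);\theta\bigr)\,ds\Bigr)\,\mathcal E\!\Bigl(-\tfrac{\theta}{2}\!\int_0 h'\Sigma\,dW^{\eta,\xi}\Bigr)_T .
\end{equation*}
Taking expectation under $\mathbb P^{\eta,\xi}$ and recognizing the Dol\'eans-Dade factor as $d\mathbb P^{h,\eta,\xi}/d\mathbb P^{\eta,\xi}$ from (2.7), which has unit expectation by the Kazamaki estimate established in Lemma 4.2, I change measure to $\mathbb P^{h,\eta,\xi}$ and conclude
\begin{equation*}
E^{\eta,\xi}\!\bigl[e^{-\frac{\theta}{2}\log V^{h,\eta,\xi}(T)}\bigr]
= v^{-\theta/2}\,E^{h,\eta,\xi}\!\bigl[e^{\frac{\theta}{2}\int_0^T g\,ds}\bigr].
\end{equation*}
Taking the logarithm and multiplying by $-2/\theta$ yields $\tilde J(v,h,\eta,\xi,T;\theta) = I(v,x,h,\eta,\xi,0,T;\theta)$ for every admissible triple.

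With the pointwise identity in hand, the two-sided inequalities that constitute the saddle point for $I$, delivered by Proposition 3.1 and its corollary, translate immediately into the corresponding inequalities for $\tilde J$. In particular $\sup_h\inf_{(\eta,\xi)}\tilde J = \inf_{(\eta,\xi)}\sup_h\tilde J = \tilde J(v,\hat h,\hat\eta,\hat\xi,T;\theta) = u(0,x)$, and evaluating the quadratic ansatz at $t=0$ produces the claimed value $\tfrac12 x'Q(0)x + q'(0)x + k(0)$.

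The main obstacle is confirming that the reduction is uniformly valid on the admissible classes, not just at the saddle point. Concretely, I must check that $\hat h$, which by (4.4) is affine in $X$ with bounded coefficients on the compact interval $[0,T]$, lies in $\mathcal H(T)$; the delicate requirement is the exponential moment $E\bigl[e^{(\theta^2/2)\int_0^T \hat h'\Sigma\Sigma'\hat h\,dt}\bigr]<\infty$, which follows from standard Gaussian-tail estimates for quadratic functionals of the Ornstein–Uhlenbeck-type process $X$ together with the continuity (hence boundedness) of $Q(\cdot)$ and $q(\cdot)$ on $[0,T]$. The corresponding fact for $(\hat\eta,\hat\xi)\in\mathcal O(T)$ is immediate from (4.6) given the continuity of $Q,q,r$. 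With these admissibility checks and Lemma 4.2 invoked once and for all, the equivalence of GI and GII is an immediate consequence of the pointwise identity of their objectives.
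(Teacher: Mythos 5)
Your argument is correct, but it follows a genuinely different route from the paper's. The paper proves Proposition 4.1 by redoing a verification-type computation directly at the level of $\tilde J$: it introduces $\bar Z_s$ in (\ref{4.11}), applies It\^o's formula to $\exp\{\chi(t+s,X(s))+\bar Z(s)\}$ under $\mathbb{P}^{\eta,\xi}$ to obtain (\ref{4.12}), invokes conditions (1)--(5) of Proposition 3.1 on the sign of $\mathcal{A}^{h,\eta,\xi}u$, and then uses Lemma 4.2 (the Kazamaki condition for the \emph{combined} stochastic exponential involving $(Q X+q)\Lambda+h'\Sigma$) to kill the martingale factor after taking $E^{\eta,\xi}$-expectations. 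You instead prove the pointwise identity $\tilde J(v,h,\eta,\xi,T;\theta)=I(v,x,h,\eta,\xi,0,T;\theta)$ on all of $\mathcal{H}(T)\times\mathcal{O}(T)$ by integrating (\ref{2.6}), recognizing the Dol\'eans--Dade factor as the density (\ref{2.7}), and changing measure to $\mathbb{P}^{h,\eta,\xi}$; the saddle point then transfers wholesale from Corollary 3.1. This is arguably cleaner: it isolates the single analytic fact actually needed (that $\mathcal{E}(-\tfrac{\theta}{2}\int_0 h'\Sigma\,dW^{\eta,\xi})_T$ has unit expectation) and avoids repeating the verification inequalities. Two small points of bookkeeping: the unit-expectation property you need is the one the paper asserts via Novikov immediately after (\ref{2.7}) using the exponential-moment condition in the definition of $\mathcal{H}(T)$, not Lemma 4.2, whose integrand also contains the $(QX+q)\Lambda$ term and is tailored to the paper's combined exponential; citing Lemma 4.2 here is a slight mismatch, though the same Cauchy--Schwarz argument covers your case. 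Second, your closing admissibility check that $\hat h$ of (\ref{4.4}) actually lies in $\mathcal{H}(T)$ is a point the paper passes over silently, so flagging it is a genuine improvement, though the claimed Gaussian-tail estimate for the quadratic exponential moment holds only for suitable parameter ranges and deserves the same caveat one would attach to the paper's own implicit use of it.
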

\begin{proof}
Define,
\begin{eqnarray}\label{4.11}
\bar{Z}_{s}&=&\bar{Z}_{s}(h,\eta,\xi)=\frac{\theta}{2}\bigg\{\int_{0}^{s}{g(X({\tau}),h({\tau}),\eta({\tau}),\xi({\tau}),r({t+\tau});\theta)}d\tau-{(h^{'}({\tau})\Sigma)}dW^{\eta,\xi}({\tau})\no\\
&-&\frac{\theta}{4}{(h^{'}({\tau})\Sigma)}^{'}{(h^{'}({\tau})\Sigma)}d\tau\bigg\}.
\end{eqnarray}
Also define, $\chi(t,x)=-\frac{\theta}{2}(u(t,x)-\log{v})$.
From some straightforward calculations provided in the Appendix  we obtain the following relation,
\begin{eqnarray}\label{4.12}
&&\exp\{\chi(T,X(T-t))+\bar{Z}(T-t)\}=\exp(\chi(t,x))\exp \bigg[\int_{0}^{T-t}{-\frac{\theta}{2}(\mathcal{A}^{h,\eta,\xi} u(t+s,X_s))}ds\no\\
&-&\int_{0}^{T-t}{\frac{\theta}{2}[Du(t+s,X_s)^{'}\Lambda+(h^{'}(t)\Sigma)]}dW^{\eta,\xi}_t\no\\
&-&\int_{0}^{T-t}{\frac{{\theta}^{2}}{8}{[Du(t+s,X_s)^{'}+(h^{'}(t)\Sigma)][Du(t+s,X_s)^{'}+h^{'}(t)\Sigma]^{'}}ds}\bigg].\no\\
\end{eqnarray}
We have shown that  the saddle-point equilibrium strategies $\hat{h}$ and $(\hat{\eta},\hat{\xi})$ deduced by solving  game ({\bf GI}) with corresponding game payoff function $u$ satisfies  conditions (1)-(5) of Proposition 3.1. Therefore from condition(4) of Proposition 3.1,  we have $\chi(T,x)=0$. Moreover ${({V^{h,\eta,\xi}({T})})}^{-\theta/2}=v^{-\theta/2}e^{\bar{Z}_{T}}.$ Setting $t=0$ and taking condition (1) of  Proposition 3.1 into account for $\eta=\hat{\eta}, \xi=\hat{\xi}$, and for any $h \in \hat {\mathcal{H}}(T)$ we see from (\ref{4.11}) that \\
\begin{eqnarray*}
{({V^{h,\eta,\xi}({T})})}^{-\theta/2}&\geq& e^{-\frac{\theta}{2}u(0,x)}\exp\bigg[-\int_{0}^{T}{\frac{\theta}{2}[Du(s,X(s))^{'}\Lambda+h^{'}({s})\Sigma]}dW^{\eta,\xi}(s)\no\\
&-&\int_{0}^{T}{\frac{{\theta}^{2}}{8}{[Du(s,X(s))^{'}+h^{'}(s) \Sigma][Du(s,X(s))^{'}+h^{'}(s)\Sigma]^{'}}ds}\bigg].
\end{eqnarray*}
Now by taking expectations w.r.t to the physical probability measure $\mathbb{P}^{\eta,\xi}$ on both sides of above equation and using Lemma 4.5, we obtain\\
\begin{eqnarray*}
\tilde{J}(v,h,\eta,\xi,T;\theta) \leq u(0,x).
\end{eqnarray*}
This inequality is true for all $h \in {\mathcal{H}}(T)$. Hence we have,
\begin{eqnarray*}
\sup_{h \in {\mathcal{H}}(T)} \tilde{J}(v,h,\eta,\xi,T;\theta)  \leq u(0,x).
\end{eqnarray*}
Hence we have,
\begin{eqnarray}\label{4.13}
\inf_{(\eta,\xi) \in \mathcal{O}(T)}\sup_{h \in {\mathcal{H}}(T)} \tilde{J}(v,h,\eta,\xi,T;\theta) \leq \sup_{h \in {\mathcal{H}}(T)} \tilde{J}(v,h,\eta,\xi,T;\theta)  \leq u(0,x).
\end{eqnarray}
Likewise, setting $t=0$ and taking condition  (2) and condition (5) of Proposition 3.1 into account  we see that
\begin{eqnarray}\label{4.14}
\sup_{h \in {\mathcal{H}}(T)} \inf_{(\eta,\xi) \in \mathcal{O}(T)} \tilde{J}(v,h,\eta,\xi,T;\theta)\geq u(0,x)\geq \inf_{(\eta, \xi) \in \mathcal{O}(T)}\sup_{h \in {\mathcal{H}}(T)} \tilde{J}(v,h,\eta,\xi,T;\theta).
\end{eqnarray}
Similarly ,setting $t=0$ and taking condition  (3) and (5) of Proposition 3.1 into account for $h=\hat{h},\gamma=\hat{\gamma}$ such that $\hat{h} \in \mathcal{H}(T)$ and $(\hat{\eta},\hat{\xi}) \in \mathcal{O}(T)$ we see that
\begin{eqnarray}\label{4.15}
\tilde{J}(v,\hat{h},\hat{\eta},\hat{\xi},T;\theta)= u(0,x).
\end{eqnarray}
From (\ref{4.13})-(\ref{4.15}) and the fact that \\ $\sup_{h \in {\mathcal{H}}(T)} \inf_{(\eta,\xi) \in \mathcal{O}(T)} \tilde{J}(v,h,\eta,\xi,T;\theta) \leq  \inf_{(\eta, \xi) \in \mathcal{O}(T)}\sup_{h \in {\mathcal{H}}(T)} \tilde{J}(v,h,\eta,\xi,T;\theta)$ is automatically true,  we conclude that the saddle-point equilibrium controls obtained by solving game ({\bf GII})  in fact also constitutes saddle-point strategy for the original game ({\bf GI}).
\end{proof}

\appendix

\section{}
\textit{As part of the proof of Proposition 4.1}\\
Let $\chi(t,x)=-\frac{\theta}{2}(u(t,x)-\log{v})$ and $Lu(t,x)=\frac{1}{2}tr(\Lambda\Lambda^{'}D^{2}u(t,x))+(b+Bx-\Lambda(\eta^{'}x+\xi^{'}))^{'}Du(t,x)$\\
Hence, we have\\
\begin{eqnarray*}
d\chi(t+s,X(s))&=&-\frac{\theta}{2}(\frac{\partial u}{\partial t}+Lu)(t+s,X(s))ds-\frac{\theta}{2}Du(t+s,X(s))^{'}\Lambda dW^{\eta,\xi}(s)\\
\therefore  \frac{d\exp\{\chi(t+s,X(s))\}}{\exp\{\chi(t+s,X(s))\}}&=& -\frac{\theta}{2}(\frac{\partial u}{\partial t}(t,x)+{L}u)(t+s,X(s))-\frac{\theta}{2}Du(t+s,X(s))^{'}\Lambda dW^{\eta,\xi}(s)\\
&+&\frac{\theta^{2}}{8}Du^{'}\Lambda\Lambda^{'}Du(t+s,X(s))ds
\end{eqnarray*}
\begin{eqnarray*}
\therefore \frac{d\exp\{\chi(t+s,X(s))\}\exp\{Z(s)\}}{\exp\{\chi(t+s,X(s))\}\exp\{Z(s)\}}&=& -\frac{\theta}{2}(\frac{\partial u}{\partial t}(t,x)+{L}u)(t+s,X(s))
-\frac{\theta}{2}{Du(t+s,X(s))}^{'}\Lambda dW^{\eta,\xi}(s)\\
&+&\frac{\theta^{2}}{8}Du^{'}\Lambda\Lambda^{'}Du(t+s,X(s))ds +\frac{\theta}{2}{g(X(t),h(t),\eta(t),\xi(t),r(s+t);\theta)}ds\\
&-&\frac{\theta}{2}h^{'}(s)\Sigma dW^{\eta,\xi}(s)
+\frac{\theta^{2}}{4}h^{'}(s)\Sigma\Lambda^{'}Du(t+s,X(s))ds
\end{eqnarray*}
Integrating above equation we yield (\ref{4.12}).



\acks
The financial support of Chancellor's Scholarship of University of Warwick is gratefully acknowledged. This article is dedicated as an ode to the rich legacy of research work of Wendell H. Fleming \cite{WHF} and Peter Whittle \cite{W}.


%
%
%
%

\end{document}